\newcommand{\ket}[1]{\left| #1 \right\rangle}
\newcommand{\bracket}[2]{\left\langle #1 | #2 \right\rangle}
\newcommand{\proj}[1]{| #1 \rangle \langle #1 |}
\newcommand{\trace}[1]{\mathrm{Tr}\left( #1 \right)}
\newcommand{\real}[1]{\mathrm{Re}\left[ #1 \right]}
\newcommand{\ip}[2]{\left\langle #1, #2 \right\rangle}
\newcommand{\sgn}{\operatorname{sgn}}
\newtheorem{lemma}{Lemma}
\renewcommand*\env@matrix[1][\arraystretch]{%
  \edef\arraystretch{#1}%
  \hskip -\arraycolsep
  \let\@ifnextchar\new@ifnextchar
  \array{*\c@MaxMatrixCols c}}
\newcolumntype{Y}{>{\centering}X}
\begin{document}


\title{Exploring the geometry of qutrit state space using
symmetric informationally complete probabilities}




\author{Gelo Noel M. Tabia}
\email[Corresponding author: ]{gtabia@perimeterinstitute.ca}
\affiliation{Perimeter Institute for Theoretical Physics,
31 Caroline Street North, Waterloo, Ontario, Canada, N2L 2Y5}
\affiliation{Department of Physics and Astronomy and
Institute for Quantum Computing, University of Waterloo,
200 University Avenue West, Waterloo, Ontario, Canada, N2L 3G1}

\author{D. M. Appleby}
\email{mappleby@perimeterinstitute.ca}
\affiliation{Perimeter Institute for Theoretical Physics,
31 Caroline Street North, Waterloo, Ontario, Canada, N2L 2Y5}
\affiliation{Stellenbosch Institute for Advanced Study,
Wallenberg Research Centre at Stellenbosch University,
Marais Street, Stellenbosch, South Africa, 7600}

\begin{abstract}
We examine the geometric structure of qutrit state space by identifying
the outcome probabilities of symmetric informationally complete (SIC)
measurements with quantum states. We categorize the infinitely many
qutrit SICs into 8 SIC-families corresponding to independent orbits
of the extended Clifford group. Every SIC can be uniquely identified
from a set of geometric invariants that we use to establish several
properties of the convex body of qutrits, which include a simple
formula describing its extreme points, an expression for the rotation
between the probability vectors for distinct qutrit SICs, and
a polar equation for its boundary states.
\end{abstract}

\pacs{03.65.Aa, 03.65.Wj, 04.60.Pp}

\maketitle

\section{Introduction}
\label{sec:introduction}

In quantum mechanics, the state of a physical system is usually
described by a density operator, which is a positive semidefinite,
Hermitian matrix with unit trace. For any pair of Hermitian
matrices $A$ and $B$, define the Hilbert-Schmidt inner product
$
\ip{A}{B}_\mathrm{HS} = \trace{AB}.
$
The space of Hermitian matrices $\mathrm{Herm}(\mathcal{H}^d)$
on a Hilbert space $\mathcal{H}^d$ then forms a Euclidean space.
If we consider the set of density operators $\mathcal{D}(\mathcal{H}^d)$
as a subset of $\mathrm{Herm}(\mathcal{H}^d)$, then we can think of
$d$-dimensional quantum states as points in a $(d^2-1)$-dimensional
convex set $\mathcal{C} \subset \mathbb{R}^{d^2}$ that is isomorphic
to $\mathcal{D}(\mathcal{H}^d)$. We expect the geometric features
of the convex set $\mathcal{C}$ to reflect properties of
density operators. For example, the full geometry of $\mathcal{C}$ is
well-known for $d=2$; it is a solid 3-dimensional ball called the
Bloch ball. The spherical boundary of the ball corresponds to pure
states, where orthogonal states get mapped onto antipodal points, and
interior points correspond to mixtures, each of which can be
decomposed into any convex combination of pure states whose convex
hull contains that point. We can also compare how similar any two
states $\rho_1$ and $\rho_2$ are by measuring their Hilbert-Schmidt distance
$
D_\mathrm{HS}(\rho_1,\rho_2) = \sqrt{\trace{\rho_1-\rho_2}^2}.
$
Little, however, is known of the same convex geometry for quantum
states in higher dimensions. Much effort has been made
in uncovering the rich, intricate structure of $\mathcal{C}$
for $d=3$ by examining the various 2- and 3-dimensional sections
obtained from the generalized Bloch representation for qutrits
\cite{kimura2003,byrd2003,mendas2006, goyal2012,bengtsson2012} but many
details of its overall structure remain unknown.

In this paper, we analyze the geometric features of qutrits
in terms of the probabilities for a special measurement called
a symmetric informationally complete (SIC) measurement
\cite{zauner1999,caves1999,renes2004, appleby2005, weigert2006,
godsil2008, scottgrassl2010}. A SIC measurement maps each density operator into
a unique probability vector, which represents one way of specifying
an isomorphism of $\mathcal{D}(\mathcal{H}^d)$ onto a subset
of $\mathbb{R}^{d^2}$. This particular mapping allows us to
characterize quantum states as a proper subset of the
probability simplex, where the restriction is imposed mainly
by a special version of the Born rule.

We find that the SIC probabilities provide us with a novel way
of characterizing properties of qutrits, particularly with
respect to the flat geometry induced by the Euclidean metric
defined on the simplex. Here we present three main results:
\begin{enumerate}[(i)]
\item The extreme points representing pure states for qutrits are
obtained from a simple formula that picks out a submanifold of
points lying on a certain sphere.
\item The probability vectors obtained for any pair of SICs are
related by a rotation with a very simple form, which we construct
explicitly. It may be worthwhile to note that the rotation does
not necessarily arise from a unitary transformation between the
SICs involved (except when the SICs are unitarily equivalent).
\item The boundary points are described using a polar equation
that gives their radial distances from the uniform distribution,
which represents the maximally mixed state. This is different
from existing methods that analyze the geometry of qutrits by
studying the boundary of various 2-dimensional sections
\cite{kimura2003,goyal2012,bengtsson2012}.
\end{enumerate}
The motivation for such a study is twofold. Firstly, the
geometry of quantum states is interesting in its own right and a
better understanding of it may have important repercussions for
various applications of quantum information processing.
Secondly, Fuchs and Schack advocate a framework for
reformulating quantum mechanics directly in terms of
probabilities without mentioning Hilbert space at all
\cite{fuchsschack2011}. A better understanding of the structure
of SIC probabilities may prove useful in identifying the
basic axioms needed for reconstructing quantum theory
exclusively in terms of probabilities.

\section{Properties of Weyl-Heisenberg qutrit SICs}
\label{sec:sicwh}

One way to represent quantum states in terms of probabilities is to
express density operators in terms of $d^2$ linearly independent
projectors $\Pi_i = \proj{\psi_i}$ such that
\begin{equation}
\label{eq.condSIC}
| \bracket{\psi_i}{\psi_j}|^2 = \frac{d\delta_{ij} +1}{d+1}.
\end{equation}
When each projection is scaled by $\frac{1}{d}$, we get a measurement
called a symmetric informationally complete positive operator-valued
measurement (SIC-POVM), a topic of considerable interest in the
quantum physics community. In this paper, the set
$\{ \Pi_i \}_{i=1}^{d^2}$ is called a \emph{SIC} for short.

The one-to-one correspondence between the outcome probabilities
$p(i)$ of a SIC-POVM and density operators is given by
\cite{fuchs2004}
\begin{equation}
\label{eq.sicRep}
\rho = \sum_{i=1}^{d^2} \left[ (d+1)p(i)-\frac{1}{d} \right]\Pi_i.
\end{equation}
Thus, the SIC probability vectors $\vec{p}$ provide an equivalent
description of quantum states, which we call the \emph{SIC representation}.
It follows that we can always choose the coordinates of $\mathcal{C}$
such that $\vec{p} \in \mathcal{C}$ for all $\vec{p}$ associated
with a quantum state according to Eq. (\ref{eq.sicRep}).
The relation between the Hilbert-Schmidt inner product for
a pair of density operators $\rho_1$ and $\rho_2$, and the scalar
product of their respective probability vectors $\vec{p}_1$ and
$\vec{p}_2$ is given by
\begin{equation}
\trace{\rho_1\rho_2} = d(d+1)\vec{p}_1\cdot\vec{p}_2 - 1.
\end{equation}

All SICs constructed to date have a certain group covariance
property. Let $G$ be a group of $d^2$ elements and let
$g \mapsto U_g$ be a projective representation of $G$ on
$\mathcal{H}^d$. Let $\ket{\psi}\in\mathcal{H}^d$. If the set
of vectors $U_g\ket{\psi}$ generates a SIC $\mathcal{S}$, we say
that $\mathcal{S}$ is covariant with respect to $G$. The seed
vector $\ket{\psi}$ for $\mathcal{S}$ is called a fiducial vector.

In almost all known cases, the unitaries that produce
SICs belong to the Weyl-Heisenberg group. Let
$\{ \ket{j} \}_{j=0}^{d-1}$ be an orthonormal basis
for $\mathcal{H}^d$.  The Weyl-Heisenberg group is generated
by the shift $X$ and phase $Z$ operators,
\begin{align}
\nonumber
X \ket{j} &= \ket{j + 1 \mod d}, \\
Z \ket{j} &= \omega^j \ket{j}
\end{align}
where $\omega = e^{i\frac{2\pi}{d}}$.
We can act with powers of $X$ and $Z$ on a SIC fiducial $\ket{\psi}$
so that the resulting vectors
\begin{equation}
\ket{\psi_{mn}} = X^m Z^n \ket{\psi}, \qquad m,n = 0,1,\dots, d-1,
\end{equation}
satisfy Eq. (\ref{eq.condSIC}). In that case, the projectors
associated with $\ket{\psi_{mn}}$ form a SIC, which we call
a \emph{Weyl-Heisenberg SIC}.

The operators $X$ and $Z$ generate the group
\begin{equation}
W(d) = \left\{ \omega^\alpha X^m Z^n |\ \alpha,m,n = 0,1,\ldots,d-1 \right\},
\end{equation}
which we call the \emph{Weyl-Heisenberg group}.
Note that $W(d)$ is of order $d^3$. However, two unitaries
which differ only by a phase generate the same SIC projector,
so the SIC itself only contains $d^2$ elements.

The normalizer of the Weyl-Heisenberg group is called the
\emph{Clifford group} $C(d)$
\cite{gottesman1998, appleby2005,appleby2009,porteous1995},
which is itself a unitary subgroup in dimension $d$. If $U \in C(d)$ is a
Clifford unitary operator and $W(d)$ is the Weyl-Heisenberg group
then
\begin{equation}
U W(d) U^{\dag} = W(d).
\end{equation}
If the set of antiunitary operators that map $W(d)$
to itself are included, we get the \emph{extended Clifford group}.

In $d = 3$,  we have
\begin{equation}
X =
\begin{pmatrix}
0 & 0 & 1 \\
1 & 0 & 0 \\
0 & 1 & 0
\end{pmatrix},
\qquad
Z =
\begin{pmatrix}
1 & 0 & 0 \\
0 & \omega & 0 \\
0 & 0 & \omega^2
\end{pmatrix}.
\end{equation}
As a matter of convention, we label the Weyl-Heisenberg SIC
projectors $\Pi_i$ with index $i = dm + n + 1$,
so we have $i = 1,2,\ldots, 9$ for qutrit SICs. For example, $\Pi_6$ is
the SIC projector corresponding to $\ket{\psi_6} = X Z^2\ket{\psi}$.

Every Weyl-Heisenberg SIC in $d=3$ can be obtained by
acting with an (extended) Clifford (anti)unitary on a
SIC with fiducial vector
\begin{equation}
\ket{\psi_t} = \frac{1}{\sqrt{2}}
\begin{pmatrix}
0 \\
1 \\
-e^{2it}
\end{pmatrix},
\qquad t \in \left[0, \frac{\pi}{6}\right].
\end{equation}
Fiducials corresponding to distinct values of $t$ in the range
$\left[0, \frac{\pi}{6}\right]$ generate distinct orbits of the
extended Clifford group.
In Ref. \cite{appleby2005} it is shown that there are 3 types of
orbits of the extended Clifford group in $d = 3$ for which
$\ket{\psi_t}$ is in the orbit: the infinitely many generic
ones for $t \in \left(0, \frac{\pi}{6} \right)$ and 2 exceptional
ones for the endpoints $t = 0$ and $t = \frac{\pi}{6}$.

In the generic case, each extended Clifford orbit consists of 8
SICs generated by the fiducial vectors:
\begin{align}
\label{eq.fiducials}
\nonumber
\ket{\psi_t^{(0\pm)}} &= \frac{1}{\sqrt{2}}
\begin{pmatrix}
0 \\
e^{\mp it} \\
-e^{\pm it}
\end{pmatrix},
\\
\ket{\psi_t^{(\eta\pm)}} &= \sqrt{\frac{2}{3}}
\begin{pmatrix}[1.25]
\omega^{\eta} \sin t \\
\sin\left(t\pm\frac{2\pi}{3}\right) \\
\sin\left(t\mp\frac{2\pi}{3}\right)
\end{pmatrix},
\end{align}
where $\eta = 1,2,3$.

For $t = \frac{\pi}{6}$, there are 4 distinct SICs whose
fiducials can be chosen as
\begin{align}
\nonumber
\ket{\psi^{(0)}_\frac{\pi}{6}} &= \frac{1}{\sqrt{2}}
\begin{pmatrix}
0 \\
1 \\
1
\end{pmatrix},
&
\ket{\psi^{(1)}_\frac{\pi}{6}} &= \frac{1}{\sqrt{6}}
\begin{pmatrix}
\omega \\
1 \\
-2
\end{pmatrix},
\\
\ket{\psi^{(2)}_\frac{\pi}{6}} &= \frac{1}{\sqrt{6}}
\begin{pmatrix}
\omega^{2} \\
1 \\
-2
\end{pmatrix},
&
\ket{\psi^{(3)}_\frac{\pi}{6}} &= \frac{1}{\sqrt{6}}
\begin{pmatrix}
1 \\
1 \\
-2
\end{pmatrix}.
\end{align}

For $t=0$, the fiducial generating the unique SIC can be
chosen as
\begin{equation}
\label{eq.fidHesse}
\ket{\psi_0} = \frac{1}{\sqrt{2}}
\begin{pmatrix}
0 \\
1 \\
-1
\end{pmatrix}.
\end{equation}

It is worth mentioning here that the SICs of Eq. (\ref{eq.fiducials})
are inequivalent with respect to Clifford unitaries; however,
some of the SICs for different values of $t$ are still
related by a unitary operator that is not a member
of the Clifford group.
Specifically, Zhu \cite{zhu2010} has shown that the SICs
for $t$, $\frac{\pi}{9} - t$, and $\frac{\pi}{9}+t$ are, in fact,
unitarily equivalent to each other, with the unitary
transformation relating them being
\begin{equation}
U = \mathrm{diag}(1, u, u^2 ), \quad u = e^{-i \frac{2\pi}{9}},
\end{equation}
which is not a Clifford unitary.
Moreover, there are no other unitary equivalences. This means
that every pair of SICs on any two different orbits
corresponding to $t\in \left[0,\frac{\pi}{18} \right]$ are
not equivalent.

Associated with each value of $t$ are two sets of closely related
geometric quantities. The first set consists of the traces of the
product of three SIC projectors called \emph{triple products}
$T_{ijk}$,
\begin{equation}
\label{eq.tripproddef}
T_{ijk} = \trace{\Pi_{i}\Pi_{j}\Pi_{k}}.
\end{equation}
It is shown in Ref. \cite{applebyflammiafuchs2011} that two SICs are
unitarily equivalent if and only if the triple products are the
same, up to permutation.
The other set consists of the \emph{structure coefficients}
$S_{ijk}$ that describe multiplication between SIC projectors,
\begin{equation}
\label{eq.struccoefdef}
\Pi_{i}\Pi_{j} = \sum_{k}S_{ijk}\Pi_{k}.
\end{equation}
It is straightforward to show that the structure coefficients can
be obtained from the triple products in the following way:
\begin{equation}
\label{eq.structrip}
S_{ijk} = \frac{1}{d} \left[ (d+1)T_{ijk}
- \frac{d\delta_{ij} + 1}{d+1}\right].
\end{equation}
In Sec. \ref{sec:sicrep}, we shall see that the real parts
$\tilde{T}_{ijk} = \real{T_{ijk}}$ and $\tilde{S}_{ijk} =
\real{S_{ijk}}$ of the triple products and structure coefficients,
respectively, are adequate for describing probability vectors
corresponding to quantum states.

It is easy to compute $S_{ijk}$ when some of the indices are
identical:
\begin{equation}
\label{eq.strucCoefSameIndex}
S_{iii} = 1, \qquad S_{ijj} = S_{jij} = \frac{1}{4},
\qquad S_{jji} = 0.
\end{equation}
It is also straightforward to compute
$\tilde{S}_{ijk}$ for $i\neq j \neq k $ using
Eq.~(\ref{eq.tripproddef}) and Eq.~(\ref{eq.structrip}).
Taking only the real parts, the distinct nonzero values are
$-\frac{1}{4}$ and 3 other values we denote as
$x_t, y_t,$ and $z_t$:
\begin{align}
\label{eq.defxyz}
\nonumber x_t &=  -\frac{1}{6} \left( \cos 6t
+ \frac{1}{2}\right), \\
y_t &= -\frac{1}{6} \left[ \cos\left(6t + \frac{2\pi}{3} \right)
+ \frac{1}{2} \right]
\equiv x_{\frac{\pi}{9} + t},  \\
\nonumber
z_t &=  -\frac{1}{6} \left[ \cos\left(6t - \frac{2\pi}{3} \right)
 + \frac{1}{2} \right]
\equiv x_{\frac{\pi}{9} - t}.
\end{align}
It can be seen that SICs with parameter values $t, \frac{\pi}{9}-t$,
and $\frac{\pi}{9}+t$ have the same values of $x, y$, and $z$, up
to a permutation---confirming the fact mentioned earlier, that
such SICs are unitarily equivalent.

Hughston \cite{hughston2007} has shown that the SIC vectors of the
single SIC for $t=0$ can be obtained from the inflection points of
a family of cubic elliptic curves on the complex projective plane
known as the Hesse pencil (see also Bengtsson \cite{bengtsson2010}.)
There are 8 SICs with parameter value $t = \frac{\pi}{9}$ that are
unitarily equivalent to the single SIC for $t=0$, and we call
these 9 SICs the \emph{Hesse SICs}. The particular SIC specified
by Eq. (\ref{eq.fidHesse}) shall be called
the \emph{canonical Hesse SIC}.

For the canonical Hesse SIC, Eq.~(\ref{eq.defxyz}) gives
\begin{equation}
\label{eq.strucHesse}
x_0 = -\frac{1}{4}, \quad y_0 = z_0 = 0.
\end{equation}
It is the simplicity of these numbers that leads to an elegant
characterization of qutrit pure states in  Sec.~\ref{sec:sicrep}.

\begin{table}
\caption{Structure coefficient index generators for qutrit SICs.
The rules for choosing the index triples $(ijk)$ for each distinct
value of $\tilde{S}_{ijk}$ are described in the main text.}
\begin{center}
\begin{tabularx}{0.45\textwidth}{Y Y}
\hline
\hline \tabularnewline
$G_{0+} = \left[
\begin{array}{ccc}
1 & 2 & 3 \\
4 & 5 & 6 \\
7 & 8 & 9
\end{array} \right]$,
&
$G_{0-} = \left[
\begin{array}{ccc}
1 & 3 & 2 \\
4 & 6 & 5 \\
7 & 9 & 8
\end{array} \right]$, \tabularnewline \\[-1.2ex]
$G_{1+} = \left[
\begin{array}{ccc}
1 & 5 & 9 \\
2 & 6 & 7 \\
3 & 4 & 8
\end{array} \right]$,
&
$G_{1-} = \left[
\begin{array}{ccc}
1 & 9 & 5  \\
2 & 7 & 6  \\
3 & 8 & 4
\end{array} \right]$, \tabularnewline \\[-1.2ex]
$G_{2+} = \left[
\begin{array}{ccc}
1 & 6 & 8 \\
2 & 4 & 9 \\
3 & 5 & 7
\end{array} \right]$,
&
$G_{2-} = \left[
\begin{array}{ccc}
1 & 8 & 6  \\
2 & 9 & 4  \\
3 & 7 & 5
\end{array} \right]$,  \tabularnewline \\[-1.2ex]
$G_{3+} = \left[
\begin{array}{ccc}
1 & 4 & 7 \\
2 & 5 & 8 \\
3 & 6 & 9
\end{array} \right]$,
&
$G_{3-} = \left[
\begin{array}{ccc}
1 & 7 & 4 \\
2 & 8 & 5 \\
3 & 9 & 6
\end{array} \right]$.
\tabularnewline
\tabularnewline
\hline
\hline

\end{tabularx}
\end{center}
\label{tab.ig}
\end{table}

There are some simple rules for finding the index triples
corresponding to the values $-\frac{1}{4}, x_t, y_t$, and
$z_t$, which we describe next.

To each qutrit SIC-family $\ket{\psi_t^{(\eta\pm)}}~(\eta=0,1,2,3)$
in Eq.~(\ref{eq.fiducials}) we assign an index generator
$G_{\eta\pm}$ that helps us choose the index triples $(ijk)$ for
each distinct value of $\tilde{S}_{ijk}$. They are listed in
Table~\ref{tab.ig}. To illustrate what the rules are, let us
take the SIC $\ket{\psi_t^{(2+)}}$ as a specific example.

For $\tilde{S}_{ijk} = -\frac{1}{4}$, take the index triples
on the same row. Looking at $G_{2+}$ in
Table~\ref{tab.ig}, we see that the relevant set of
$(ijk)$ for $\ket{\psi_t^{(2+)}}$ is
\begin{equation}
\nonumber
\{ (168), (249), (357) \}
\end{equation}
and all permutations of indices for each $(ijk)$.

For $\tilde{S}_{ijk} = x_t$, take the index triples belonging
to the same column, or those on entirely different rows and
columns. Thus, the relevant set
of $(ijk)$ from $G_{2+}$ is
\begin{equation}
\nonumber
\{
(123), (645), (897), (147), (693), (825), (195), (627), (843)
\}
\end{equation}
and all permutations of indices for each $(ijk)$.

For $\tilde{S}_{ijk} = y_t$, take the index triples such that
the first two indices belong to the same column, and the last one
is in a different row and belongs to the succeeding column
when counting in a cyclic manner. By succeeding we mean that
``column 2 is after column 1 '', ``column 3 is after column 2'',
and ``column 1 is after column 3.'' Thus, the relevant set
of $(ijk)$ from $G_{2+}$ is
\begin{equation}
\nonumber
\{
(125), (647), (893), (236), (458), (971), (314), (569), (782)
\}
\end{equation}
and all permutations of indices for each $(ijk)$.

For $\tilde{S}_{ijk} = z_t$, we have a similar rule as in $y_t$
but take the last index from the preceding column. Thus, the
relevant set $(ijk)$ from $G_{2+}$ is
\begin{equation}
\nonumber
\{
(127), (643), (895), (238), (451), (976), (319), (562), (784)
\}
\end{equation}
and all permutations of indices for each $(ijk)$.

Any other index triple $(ijk)$ not specified above has
$\tilde{S}_{ijk} = 0$.

We do not need them in this paper but it is possible to construct
similar, though somewhat more complicated, rules for getting
the imaginary parts of $S_{ijk}$.

\section{Pure states in the SIC representation}
\label{sec:sicrep}

Since quantum state space is a compact convex body $\mathcal{C}$ in
$\mathbb{R}^{d^2}$, the Krein-Milman theorem \cite{kreinmilman1940}
states that it is equal to the convex hull of its extreme points,
which are the pure states.
It is therefore natural to ask what are the conditions on SIC
probability vectors $\vec{p}$ such that they correspond to
pure states. In terms of density operators, a pure state is
represented by a rank-1 projector, $\rho^{2}=\rho$. A remarkable
theorem \cite{flammia2004,joneslinden2005} states that for a
hermitian operator $\rho = \rho^\dag$, an equivalent
condition for defining a pure state is given by
\begin{equation}
\label{eq.pureStateThm}
\trace{\rho^{2}} = \trace{\rho^{3}} = 1.
\end{equation}

Using the SIC representation of $\rho$ given by Eq.~(\ref{eq.sicRep}),
Eq.~(\ref{eq.pureStateThm}) becomes \cite{applebydangfuchs2007}
\begin{align}
\label{eq.quadratic}
\sum_{i} p(i)^{2} &= \frac{2}{d(d+1)}, \\
\label{eq.cubictrip}
\sum_{i,j,k} T_{ijk} p(i)p(j)p(k) &= \frac{d + 7}{(d+1)^{3}}.
\end{align}
Since the right-hand side of Eq.~(\ref{eq.cubictrip}) is real, and
since the imaginary parts of the triple products are completely
antisymmetric, we have
\begin{equation}
\sum_{i,j,k} T_{ijk} p(i)p(j)p(k) = \sum_{i,j,k}
\tilde{T}_{ijk} p(i)p(j)p(k),
\end{equation}
so we may consider just the real parts $\tilde{T}_{ijk}$. We obtain
an equivalent expression for Eq.~(\ref{eq.cubictrip}) in terms of the
structure coefficients:
\begin{equation}
\label{eq.cubicstruc}
\sum_{i,j,k} \tilde{S}_{ijk} p(i)p(j)p(k) = \frac{4}{d(d+1)^{2}}.
\end{equation}

\begin{figure}
\centering
\includegraphics[scale=0.25]{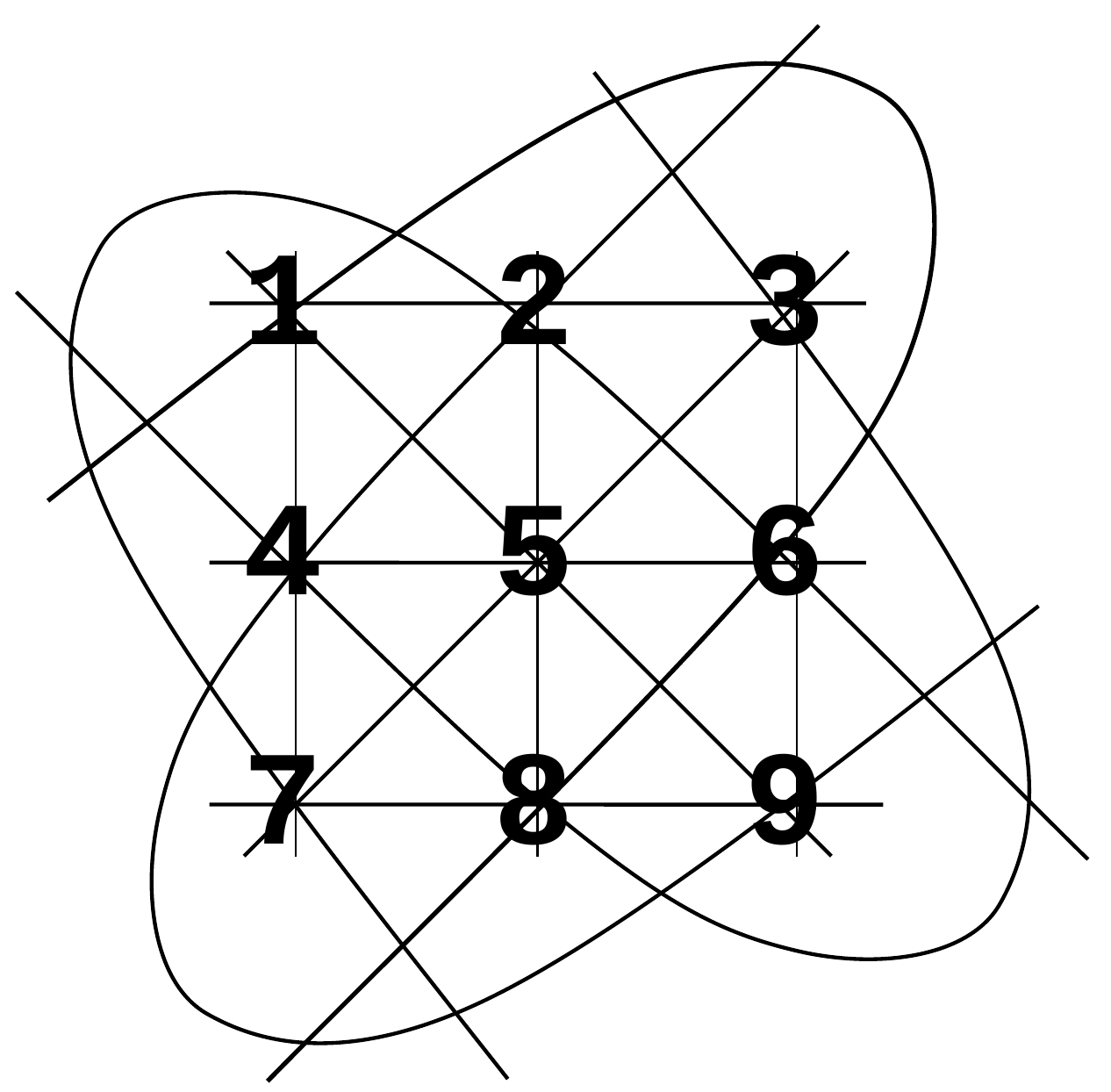}
\caption{\small \small The 12 lines of a finite affine plane over
the Galois field $\mathrm{GF(3)}$ representing the index triples $(ijk) \in Q$
in Eq.~(\ref{eq.nicequtrit2}). The indices are depicted as 9 points
and each line contains 3 points as marked.}
\label{fig.affineStriate}
\end{figure}

Specializing to the case $d = 3$, the pure states are described
by probability vectors $\vec{p}$ that satisfy
\begin{align}
\label{eq.purequtrit1}
\sum_{i} p(i)^{2} &= \frac{1}{6}, \\
\label{eq.purequtrit2}
\sum_{i,j,k} \tilde{S}_{ijk}p(i)p(j)p(k) &= \frac{1}{12}.
\end{align}
Using the results in Sec. \ref{sec:sicwh}, we find that the pure
states for the canonical Hesse SIC are given by
\begin{align}
\label{eq.nicequtrit1}
\sum_{i} p(i)^{2} &= \frac{1}{6}, \\
\label{eq.nicequtrit2}
\sum_{i} p(i)^{3} &= \frac{1}{2} \sum_{(ijk)\in Q} p(i)p(j)p(k)
\end{align}
where $Q$ is the set of index triples $(ijk)$ corresponding to
the lines drawn in Fig.~\ref{fig.affineStriate}, where
permutations of the indices $i,j,k$ are counted separately.
Interestingly, Fig.~\ref{fig.affineStriate} coincides with a
combinatorial object called a finite affine plane of order 3
(which can also be identified with the unique 2-$(9,3,1)$-design)
\cite{hirschfeld1998}. It contains 9 points and 12 lines,
and the index triples in $Q$ correspond to any 3 points on
the same line.

The pure states for any other qutrit SIC are located on the same
sphere given by Eq.~(\ref{eq.purequtrit1}) but with a different
set of values for $\tilde{S}_{ijk}$ in Eq.~(\ref{eq.cubicstruc}).
If we substitute the values in Eq.~(\ref{eq.strucCoefSameIndex})
into Eq.~(\ref{eq.cubicstruc}), we obtain
\begin{equation}
\label{eq.alternateCubic}
\frac{1}{2} \sum_{i} p(i)^3 + \sum_{i\neq j\neq k}
\tilde{S}_{ijk} p(i)p(j)p(k) = 0.
\end{equation}

\section{Orthogonal transformations between probability vectors
of distinct SICs}
\label{sec:rotprob}

In this section, we will show that, in any dimension, the probability
vectors corresponding to two different SICs are related by an
orthogonal transformation, a fact known from Ref. \cite{filippov2010}.
We will then go on to show that in dimension 3, the orthogonal
transformation takes a remarkably simple form.

In arbitrary dimension $d$, consider a pair of distinct SICs with
elements $\Pi'_i$ and $\Pi_j$. In the vector space of operators,
$\Pi'_i$ and $\Pi_j$ correspond to the vertices of 2 identical
regular simplices, which means they must be related by an orthogonal
transformation. Formally, because a SIC forms a Hermitian basis in
the space of operators, we can write
\begin{equation}
\label{eq.relatePi}
\Pi'_i = \sum_j R_{ij} \Pi_j.
\end{equation}
Since every SIC element has unit trace, taking the trace on both
sides of Eq.~(\ref{eq.relatePi}) gives us
\begin{equation}
\label{eq.rotateCond1}
\sum_{j} R_{ij} = 1.
\end{equation}
Multiplying the left-hand side of Eq.~(\ref{eq.relatePi}) by $\Pi'_j$
and the right-hand side by $\sum_l R_{jl} \Pi_l$, we have
\begin{align}
\label{eq.rotateCond2}
\nonumber \trace{\Pi'_{i}\Pi'_{j}}
&= \sum_{k,l} R_{ik}R_{jl} \left(\frac{d\delta_{kl}+1}{d+1}\right),\\
\implies \delta_{ij} &= \sum_{k} R_{ik}R_{jk}.
\end{align}
which confirms that $R_{ij}$ is indeed an orthogonal matrix.
Using Eq.~(\ref{eq.rotateCond2}), it is now straightforward
to show that
\begin{equation}
p'(i) = \sum_{j} R_{ij}p(j).
\end{equation}

Let $\Pi_i$ be the canonical Hesse SIC and let $\Pi^{(\eta\pm)}_i (t)$
be the SIC generated by the fiducial vector $\ket{\psi^{(\eta\pm)}_t}$.
Let
\begin{equation}
Q_i = \frac{1}{3}\left(4\Pi_i - I\right)
\end{equation}
be the dual basis to $\Pi_i$ (so $\mathrm{Tr}(Q_i \Pi_j) = \delta_{ij}$).
Then the orthogonal matrix which takes $\Pi_i$ onto
$\Pi^{(\eta\pm)}_i (t) $ is
\begin{equation}
R^{(\eta\pm)}_{ij}(t) = \mathrm{Tr}\left[\Pi^{(\eta\pm)}_i (t) Q_j\right]
\end{equation}
It turns out that the matrices $R^{(\eta\pm)}(t)$ have a
very simple form. In the standard two-line notation, define
the permutations
\begin{align}
\label{eq.permSICs}
p^{(0+)} &=
\begin{pmatrix}
1 & 2 & 3 & 4 & 5 & 6 & 7 & 8 & 9\\
1 & 2 & 3 & 4 & 5 & 6 & 7 & 8 & 9
\end{pmatrix}, \nonumber
\\
p^{(0-)} &=
\begin{pmatrix}
1 & 2 & 3 & 4 & 5 & 6 & 7 & 8 & 9\\
1 & 3 & 2 & 4 & 6 & 5  & 7 & 9 & 8
\end{pmatrix},
\nonumber
\\
p^{(1+)} &=
\begin{pmatrix}
1 & 2 & 3 & 4 & 5 & 6 & 7 & 8 & 9\\
1& 5& 9& 2& 6& 7& 3& 4& 8
\end{pmatrix}, \nonumber
\\
p^{(1-)} &=
\begin{pmatrix}
1 & 2 & 3 & 4 & 5 & 6 & 7 & 8 & 9\\
1& 9& 5& 2& 7& 6& 3& 8& 4
\end{pmatrix},
\nonumber
\\
p^{(2+)} &=
\begin{pmatrix}
1 & 2 & 3 & 4 & 5 & 6 & 7 & 8 & 9\\
1& 6& 8& 2& 4& 9& 3& 5& 7
\end{pmatrix}, \nonumber
\\
p^{(2-)} &=
\begin{pmatrix}
1 & 2 & 3 & 4 & 5 & 6 & 7 & 8 & 9\\
1& 8& 6& 2& 9& 4& 3& 7& 5
\end{pmatrix},
\nonumber
\\
p^{(3+)} &=
\begin{pmatrix}
 1 & 2 & 3 & 4 & 5 & 6 & 7 & 8 & 9\\
1& 4& 7& 2& 5& 8& 3& 6& 9
\end{pmatrix},
\nonumber
\\
p^{(3-)} &=
\begin{pmatrix}
 1 & 2 & 3 & 4 & 5 & 6 & 7 & 8 & 9\\
1& 7& 4& 2& 8& 5& 3& 9& 6
\end{pmatrix}.
\end{align}
Let $P^{(\eta\pm)}$ be the permutation matrix corresponding to
$p^{(\eta\pm)}$, with matrix elements
\begin{equation}
P^{(\eta\pm)}_{ij} = \delta_{j,p^{(\eta\pm)}(i)}.
\end{equation}
Also define
\begin{equation}
\label{eq.atFunc}
a(t) = \frac{1}{3}(1+2\cos2 t),
\end{equation}
\begin{equation}
A(t) =
\begin{pmatrix}
a(t) & a\left(t-\frac{\pi}{3}\right) & a\left(t+\frac{\pi}{3}\right)
\\
a\left(t+\frac{\pi}{3}\right) & a(t) & a\left(t-\frac{\pi}{3}\right)
\\
a\left(t-\frac{\pi}{3}\right)  & a\left(t+\frac{\pi}{3}\right) & a(t)
\end{pmatrix},
\end{equation}
\begin{equation}
R(t) =
\begin{pmatrix}
A(t) & 0 & 0 \\
0 & A(t) & 0 \\
0 & 0 & A(t)
\end{pmatrix}.
\end{equation}
It is then straightforward, though somewhat tedious, to verify that
\begin{equation}
R^{(\eta\pm)} (t) = \left[P^{(\eta\pm)}\right]^{-1} R(t) P^{(\eta\pm)}.
\end{equation}
Since $R(t) = I \otimes A(t)$, it follows that
\begin{equation}
\mathrm{Det}\bigl[R^{(\eta\pm)}(t)\bigr]  =
\mathrm{Det}[R(t)] = \left\{\mathrm{Det}\bigl[A(t)\bigr] \right\}^3.
\end{equation}
Because $A(t)$ is a circulant matrix, its eigenvalues are
given by
\begin{align}
\nonumber
\lambda_\ell & = a(t) + \omega^\ell a\left(t-\frac{\pi}{3}\right)
+\omega^{-\ell}a\left(t+\frac{\pi}{3}\right) \\
& = e^{2it\ell}
\end{align}
for $\ell = -1,0,1$. This implies that
$\mathrm{Det}\bigl[ A(t)\bigr]=1$.
Thus, $\mathrm{Det}\bigl[R^{(\eta\pm)} (t)\bigr] = 1$ and
$R^{(\eta\pm)} (t) $ is, in fact, a rotation matrix.

It is easily seen that
\begin{align}
R(t_1)R(t_2) & = R(t_1+t_2), &  R(0) & = I.
\end{align}
So the matrices $R(t)$ form a $1$-parameter subgroup of
the orthogonal group.

\section{The boundary of qutrit state space}
\label{sec:shapeQutrit}

A concrete way to understand the geometry of qutrit state space
is to figure out what the convex body looks like. In this regard,
we want to consider not just the pure states but all boundary points
of the set. Some valuable insight into the shape of the boundary
is gained by looking at the distance of the boundary states from the
center of the space, the maximally mixed state $\rho = \frac{1}{d}I$
as a function of direction. Specifically, we can write the SIC
probabilities in the form
\begin{equation}
\label{eq.boundaryPts}
p(i) = \frac{1}{d^2} + r n(i)
\end{equation}
where $\vec{n}$ is a direction vector with
\begin{equation}
\label{eq.directionVec}
\sum_i n(i) = 0, \qquad \sum_{i} n(i)^2 = 1.
\end{equation}
Let $r(\vec{n})$ be the value of the $r$ corresponding to the quantum
state on the boundary, which is given by Eq.~ (\ref{eq.boundaryPts}).
Here we calculate this function for the canonical Hesse SIC. Put
differently, we are looking for the polar equation describing
its boundary states.

The boundary is determined using the following lemma:
\begin{lemma}
\label{lem.boundary}
Let $\rho$ be an arbitrary Hermitian operator on a 3-dimensional
Hilbert space. Then
\begin{enumerate}[(i)]
\item $\rho$ is a density operator if and only if
\begin{equation}
\nonumber
\trace{\rho} = 1, \quad \trace{\rho^2} \leq 1, \quad
3 \trace{\rho^2}-2\trace{\rho^3} \leq 1;
\end{equation}
\item $\rho$ is a density operator for a boundary state
if and only if
\begin{equation}
\nonumber
\trace{\rho} = 1, \quad \trace{\rho^2} \leq 1, \quad
3 \trace{\rho^2}-2\trace{\rho^3} = 1;
\end{equation}
\item $\rho$ is a density operator for a pure quantum state
if and only if
\begin{equation}
\nonumber
\trace{\rho} = 1, \quad \trace{\rho^2} = 1, \quad
3 \trace{\rho^2}-2\trace{\rho^3} = 1.
\end{equation}
\end{enumerate}
\end{lemma}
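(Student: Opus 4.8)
The plan is to reduce everything to the eigenvalues $\lambda_1,\lambda_2,\lambda_3$ of the Hermitian operator $\rho$, since the three traces are the power sums $p_1=\sum\lambda_i$, $p_2=\sum\lambda_i^2$, $p_3=\sum\lambda_i^3$ and all the stated conditions are basis-independent. A Hermitian $\rho$ is a density operator precisely when $p_1=1$ and all $\lambda_i\ge 0$; it is moreover a boundary state exactly when $p_1=1$, all $\lambda_i\ge0$, and at least one $\lambda_i=0$ (i.e.\ $\rho$ is not full-rank, hence not in the interior of the state space); and it is pure exactly when $p_1=1$ and two of the eigenvalues vanish, equivalently $p_1=p_2=1$. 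So the whole lemma amounts to showing that, given $p_1=\op{Tr}\rho=1$, the two inequalities $p_2\le1$ and $3p_2-2p_3\le1$ together are equivalent to $\lambda_i\ge0$ for all $i$, with the second becoming an equality exactly when some $\lambda_i=0$, and with $p_2=1$ forcing purity.

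First I would fix $p_1=1$ and introduce the elementary symmetric functions $e_1=1$, $e_2=\sum_{i<j}\lambda_i\lambda_j$, $e_3=\lambda_1\lambda_2\lambda_3$, using Newton's identities to write $p_2=1-2e_2$ and $p_3=1-3e_2+3e_3$. Then $p_2\le1 \iff e_2\ge0$, and a short computation gives $3p_2-2p_3 = 1 - 3e_2 + \dots$; carrying it through, $3p_2-2p_3-1 = -3e_2(1-\ \text{something})$... more precisely I would simply compute $3p_2-2p_3 = 3(1-2e_2)-2(1-3e_2+3e_3) = 1 - 6e_3$, so the third condition is exactly $e_3\ge0$ (equality iff $e_3=0$ iff some eigenvalue is $0$). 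Thus the pair of inequalities is equivalent to $e_2\ge0$ and $e_3\ge0$, while $e_1=1>0$.

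The crux is then the purely algebraic fact: for real numbers $\lambda_1,\lambda_2,\lambda_3$ with $e_1=\lambda_1+\lambda_2+\lambda_3=1$, one has $\lambda_i\ge0$ for all $i$ if and only if $e_2\ge0$ and $e_3\ge0$. One direction is immediate. For the converse, the $\lambda_i$ are the roots of $x^3-e_1x^2+e_2x-e_3=0$; if all of $e_1,e_2,e_3$ are $\ge0$ (with $e_1>0$), then any real root $x<0$ makes $x^3-e_1x^2+e_2x-e_3<0$, a contradiction, so every root is $\ge0$. This is the step I expect to be the only real content; everything else is bookkeeping with Newton's identities. To finish part (ii), note that with $p_1=1$ and $\lambda_i\ge0$, $\rho$ lies on the boundary of the state space iff it is rank-deficient iff $e_3=\det\rho=0$ iff $3p_2-2p_3=1$; and for part (iii), $p_2=1$ combined with $p_1=1$ forces $e_2=0$, hence (given $e_3\ge0$ and all $\lambda_i\ge0$ summing to $1$) two eigenvalues vanish and the third equals $1$, i.e.\ $\rho$ is a rank-one projector, and conversely a pure state obviously satisfies all three equalities. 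I would also remark that part (iii) is just the specialization of the known purity criterion $\op{Tr}\rho^2=\op{Tr}\rho^3=1$ already quoted in Eq.~(\ref{eq.pureStateThm}), so it requires no separate argument beyond consistency with (i)--(ii).
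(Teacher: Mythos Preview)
Your proposal is correct and follows essentially the same route as the paper: both reduce to the eigenvalues and rest on the identity $3\operatorname{Tr}(\rho^2)-2\operatorname{Tr}(\rho^3)-1=-6\lambda_1\lambda_2\lambda_3=-6e_3$ (the paper computes it directly, you obtain it via Newton's identities). The one genuine difference is the sufficiency step: the paper observes that $e_3\ge 0$ leaves only the cases ``all eigenvalues $\ge 0$'' or ``exactly two negative,'' and eliminates the latter because it would force $\operatorname{Tr}(\rho^2)>1$; you instead evaluate the characteristic polynomial $x^3-e_1x^2+e_2x-e_3$ at a putative negative root and get a strict sign, which is a slightly slicker argument and one that generalizes verbatim to arbitrary $d$.
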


\begin{proof}
We begin by proving necessity.
Suppose $\rho$ is a density matrix. It immediately follows that
$\mathrm{Tr}(\rho) = 1$ and $\mathrm{Tr}(\rho^2)  \leq 1$.
To prove the remaining inequality, let $\alpha,\beta,1-\alpha-\beta$
be the eigenvalues of $\rho$.  We find
\begin{align}
3\mathrm{Tr}(\rho^2)-2\mathrm{Tr}(\rho^3)-1
& =-6 \alpha \beta (1-\alpha -\beta)\leq 0.
\label{eq.trCond}
\end{align}
For $\rho$ to be a boundary state at least one of its eigenvalues
must vanish, in which case
\begin{align}
3\mathrm{Tr}(\rho^2)-2\mathrm{Tr}(\rho^3)-1  & =0.
\end{align}
In addition, if $\rho$ is a pure state then $\mathrm{Tr}(\rho^2)  =1$.

We now turn to the proof of sufficiency.
Let $\rho=\rho^{\dagger}$ be such that
\begin{align}
\mathrm{Tr}(\rho) & = 1, & \mathrm{Tr}(\rho^2) & \le 1,
& 3\mathrm{Tr}(\rho^2)-2\mathrm{Tr}(\rho^3) & \le 1.
\end{align}
The first equality means that we can take the eigenvalues of $\rho$
to be $\alpha,\beta,1-\alpha-\beta$.  From Eq.~(\ref{eq.trCond})
we get
\begin{equation}
\alpha \beta (1-\alpha - \beta) \ge 0.
\end{equation}
Thus, either (i) all eigenvalues are non-negative or
(ii) exactly two of them are negative.
We can show that (ii) is impossible.  Assume the contrary to hold.
Without loss of generality  $\alpha, \beta < 0$, implying that
$1-\alpha-\beta> 1$, which in turn implies $\mathrm{Tr}(\rho^2) >1$,
contrary to hypothesis.  We conclude that $\rho$ is
positive semi-definite, and consequently a density matrix.

Next  assume that
\begin{align}
\mathrm{Tr}(\rho) & = 1, & \mathrm{Tr}(\rho^2)& \le 1,
& 3\mathrm{Tr}(\rho^2)-2\mathrm{Tr}(\rho^3) & = 1.
\end{align}
Then,
\begin{equation}
\alpha \beta (1-\alpha - \beta) = 0,
\end{equation}
implying that at least one of the eigenvalues must be zero.
So $\rho$  is on the boundary of state space.

Finally assume
\begin{align}
\mathrm{Tr}(\rho) & = 1, & \mathrm{Tr}(\rho^2) & = 1,
& 3\mathrm{Tr}(\rho^2)-2\mathrm{Tr}(\rho^3) & = 1.
\end{align}
Using the argument above the eigenvalues are $0,\alpha,1-\alpha$.
Since $\mathrm{Tr}(\rho^2) = 1$ it must be that $\alpha = 0$ or $1$
and therefore $\rho$ is a rank-$1$ projection operator.
\end{proof}

We can use the lemma for the quantum states associated with the
canonical Hesse SIC. To this end, recall its structure
coefficients in Eq.~(\ref{eq.strucHesse}). We use these to
calculate $\trace{\rho^2}$ and $\trace{\rho^3}$ for $\rho$ given
by Eq.~(\ref{eq.sicRep}). We find that
\begin{align}
\label{eq.rhoTraces}
\trace{\rho^2} &= 12\sum_i p(i)^2 - 1, \\
 \nonumber \trace{\rho^3} &= 1 + 24 \sum_i p(i)^3
 - 12 \sum_{(ijk)\in Q} p(i)p(j)p(k),
\end{align}
where $Q$ is again the set of lines on the affine plane in
Fig.~\ref{fig.affineStriate}.
Substituting Eq.~(\ref{eq.boundaryPts}) into the probabilities
above, we obtain
\begin{align}
\label{eq.probSums}
\nonumber \sum_i p(i)^2 &= \frac{1}{9} + r^2, \\
\sum_i p(i)^3 &= \frac{1}{81} + \frac{r^2}{3}
 + r^3 \sum_i n(i)^3, \\
\nonumber \sum_{(ijk) \in Q} p(i)p(j)p(k) &= \frac{8}{81}
 - \frac{r^2}{3} + r^3 \sum_{(ijk)\in Q} n(i)n(j)n(k).
\end{align}
Consequently, we can restate the conditions in
Lemma \ref{lem.boundary} as follows:
\begin{enumerate}[(i)]
\item $\rho$ is a density operator if and only if
\begin{equation}
r^2 \leq \frac{1}{18}, \qquad 4 r^3 F(\vec{n})  -  r^2
+ \frac{1}{54} \geq  0;
\end{equation}
\item $\rho$ is a density operator on the boundary of the state
space if and only if
\begin{equation}
r^2 \leq \frac{1}{18}, \qquad 4 r^3 F(\vec{n}) -  r^2
 + \frac{1}{54} = 0;
\end{equation}
\item $\rho$ is a density operator for a pure state if and only if
\begin{equation}
r^2 = \frac{1}{18}, \qquad F(\vec{n}) = \frac{1}{\sqrt{2}};
\end{equation}
\end{enumerate}
where
\begin{equation}
\label{eq.functionF}
F(\vec{n}) = \sum_i n(i)^3 - \frac{1}{2} \sum_{(ijk) \in Q} n(i)n(j)n(k).
\end{equation}
Thus, the value of $r(\vec{n})$ giving the distance of a boundary
state from the completely mixed state along the direction of
$\vec{n}$ is the smallest positive root of
\begin{equation}
\label{eq.polarDistance}
4 r^3 F(\vec{n})  - r^2 + \frac{1}{54} = 0.
\end{equation}

\begin{figure}[t]
\centering
\includegraphics[scale=0.73]{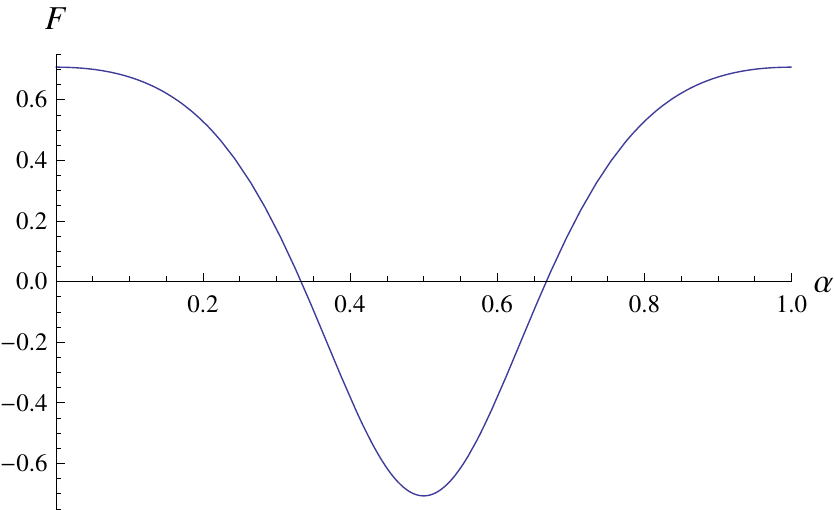}
\caption{\small A plot of $F$ as a function of one of the
eigenvalues $\alpha$ for boundary states.}
\label{fig.polarFunctionBoundary}
\end{figure}
To determine the bounds on $F \equiv F(\vec{n})$, we can write it
in terms of the eigenvalues of $\rho$ by performing some algebra
on Eq.~(\ref{eq.rhoTraces}) and Eq.~(\ref{eq.probSums}). Denoting the
eigenvalues of $\rho$ by $\alpha,\beta$, and $1- \alpha - \beta$,
we obtain
\begin{equation}
\label{eq.eigF}
F =  \frac{\sqrt{3} \left(\frac{2}{9}
- f_1 + f_2 \right)}
{\left(f_1 - \frac{1}{3}\right)^{3/2}}
\end{equation}
where $0 \leq \alpha, \beta, 1- \alpha-\beta \leq  1$ and
\begin{align}
\nonumber
f_1 &= \alpha^2 + \beta^2 + (1-\alpha-\beta)^2, \\
f_2 &= \alpha^3 + \beta^3 + (1-\alpha-\beta)^3.
\end{align}
From Eq. (\ref{eq.eigF}), it can easily be shown that
\begin{equation}
-\frac{1}{\sqrt{2}} \leq F \leq  \frac{1}{\sqrt{2}}
\end{equation}
where the upper (respectively, lower) bound is achieved when two of the
eigenvalues are identical and $<\frac{1}{3}$ (respectively, $>\frac{1}{3}$).
If all the eigenvalues are equal to $\frac{1}{3}$ this corresponds to
the maximally mixed state, for which $F$ is undefined.
For boundary states, at least one of the eigenvalues must be
zero. So the only case we need to consider for $F$ is when
$\beta = 1 - \alpha$. Figure~\ref{fig.polarFunctionBoundary} shows
$F$ as a function of $\alpha$, provided that one of the eigenvalues
vanishes.

\begin{figure}
\centering
\includegraphics[scale=0.73]{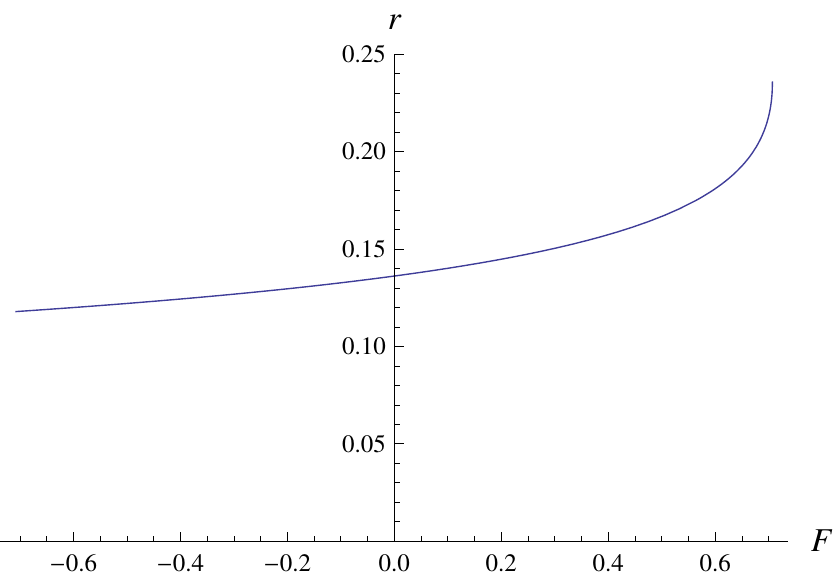}
\caption{\small The radial distance $r$ of boundary states from
the maximally mixed state, as a function of $F=F(\vec{n})$.}
\label{fig.polarRadius}
\end{figure}

In terms of $F$, the desired root in Eq.~(\ref{eq.polarDistance})
is given by
\begin{equation}
r =
\begin{dcases*}
\frac{1}{12 F} \left( 1 + \frac{g}{\omega^{\sigma}}
+ \frac{\omega^{\sigma}}{g} \right) & if $F \neq 0$, \\
\frac{1}{3\sqrt{6}} & if $F = 0$, \\
\end{dcases*}
\end{equation}
where $\sigma \equiv \sgn(F)$ and $g$ is the cube root with the
smallest positive argument in
\begin{equation}
g^3 = 1 - 4F^2 + 2F\sqrt{4F^2 - 2}.
\end{equation}
When $F = \pm\frac{1}{\sqrt{2}}$, $g = e^{i\frac{\pi}{3}}$ and
we get the bounds for $r$:
\begin{equation}
\frac{1}{6\sqrt{2}} \leq r \leq \frac{1}{3\sqrt{2}}.
\end{equation}
A plot of $r$ as a function of $F$ is shown in
Fig.~\ref{fig.polarRadius}.

Finally let us note that we can use this method to answer other
questions about the geometry of the set of qutrits.  For instance,
it has been shown \cite{aef2009, rosado2010} that the $8$-dimensional
ball $r \leq \frac{1}{3\sqrt{2}}$ is truncated by the
$7$-dimensional faces of  the probability simplex.
It is interesting to ask about the pure states located on these
faces. To answer this question, consider, for example,
the $7$-dimensional face with center point $\vec{c}$ given by
$c(i) = \frac{1}{8}$ for $i\neq 9$ and $c(9) = 0$. In this case,
we can take probability vectors of the form
\begin{equation}
p(i) = \frac{1}{8} + s m(i), \qquad
p(9) = m(9) = 0,
\end{equation}
where
\begin{equation}
\sum_{i=1}^8 m(i) = 0, \qquad
\sum_{i=1}^8 m(i)^2 = 1.
\end{equation}
Going through a similar argument as the one above, we find that
the states on the face are given by $s = 0$, or
\begin{equation}
s^2 \leq \frac{1}{24}, \qquad
F(\vec{m}) = \frac{3}{16s} \left[ 2 - \Phi(\vec{m}) \right]
\end{equation}
where
\begin{align}
\nonumber
\Phi(\vec{m}) &= \left[ m(1) + m(5) \right]^2 +
\left[ m(2) + m(4) \right]^2  \\
 &\quad +\left[ m(3) + m(6) \right]^2 +
\left[ m(7) + m(8) \right]^2
\end{align}
and $F$ is the same function defined in Eq.~(\ref{eq.functionF}).
In particular, the pure states correspond to $\vec{m}$ such that
\begin{equation}
F(\vec{m}) = \frac{3\sqrt{6}}{8}\left[2 - \Phi(\vec{m}) \right].
\end{equation}

\section{Summary and Outlook}

With the recent revival of interest in addressing foundational issues
in quantum theory, we pose a simple yet intriguing question: What is
the shape of the set of quantum states $\mathcal{C} \subset
\mathbb{R}^{d^2}$? Our preliminary attempt to address this question
revolves around a description of quantum states in terms of the
outcome probabilities of a SIC-POVM. This particular representation
allows us to exploit the intrinsic symmetry of a SIC in mapping
density operators to probability vectors, which we believe not only
serves as a natural ``coordinate system'' for studying the underlying
geometry of quantum states, but also provides us with an
interpretation of quantum states in terms of Bayesian
probabilities \cite{fuchs2010}.

In this work, we focused our attention on $d=3$, which is the simplest,
nontrivial case to examine. We considered the infinitely many
Weyl-Heisenberg qutrit SICs, which are classified into SIC-families
corresponding to orbits of the Clifford group. Each SIC can be
uniquely identified with a set of complex numbers called triple
products $T_{ijk}$, the trace of the product of three SIC elements,
whose polar angles are related to discrete geometric phases
\cite{berry1984,aharonovanandan1987} and to Bargmann invariants
\cite{bargmann1964} in complex projective space, and whose imaginary
parts give SICs the structure of a Lie algebra
\cite{applebyflammiafuchs2011}. We also have structure
coefficients $S_{ijk}$, which are the expansion coefficients when
multiplying SIC projectors, and whose real parts $\tilde{S}_{ijk}$ are
especially convenient for describing geometric properties of qutrits.

Using $\tilde{S}_{ijk}$ for the canonical Hesse SIC given by
$\ket{\psi_0}$ in Eq.~(\ref{eq.fidHesse}), we discovered the most
economical description for SIC probability vectors associated with
qutrit pure states, which are given by Eq.~(\ref{eq.nicequtrit1})
and Eq.~(\ref{eq.nicequtrit2}). Studying the probabilities for
the canonical Hesse SIC is sufficient because we demonstrated that
the probabilities for other qutrit SICs are related to it by a
9-dimensional rotation that can be expressed in terms of
a single function---$a(t)$ in Eq.~(\ref{eq.atFunc}).

The remarkable simplicity of Eq.~(\ref{eq.nicequtrit2}) suggests
that the geometric structure of qutrits is largely determined by
the symmetries associated with a finite affine plane.
For example, observe that if we consider the indices as points
in Fig.~\ref{fig.affineStriate}, the permutations given in
Eq.~(\ref{eq.permSICs}) are such that they preserve the affine
lines.
Also, using the notion of maximal consistent sets in Ref.
\cite{aef2009}, a set of probability vectors associated with the
Hesse configuration of vectors on a Hilbert space can be maximized
into a convex body that has the same largest inscribed sphere and
smallest containing sphere as qutrit state space, and also
shares some of its 2-dimensional sections. Therefore, to gain
a proper understanding of the convex geometry of qutrits,
it is crucial to understand the full significance of
Eq.~(\ref{eq.nicequtrit2}).

We also described a polar equation for the qutrit boundary.
We found that the function $F(\vec{n})$ that yields the radial
distance of a boundary state in direction $\vec{n}$ from the
uniform distribution is the same function that picks out the pure
states for the canonical Hesse SIC, i.e., $F(\vec{p}) = 0$ for
any $\vec{p}$ on the sphere containing pure states,
which again highlights the important role played by the
finite affine plane of Fig.~\ref{fig.affineStriate}.

Lastly, it is our hope that the results presented here also
serve as evidence for the utility of the SIC representation in
matters regarding quantum foundations.

\begin{acknowledgments}
The authors are grateful to Chris Fuchs and \r{A}sa Ericsson for
useful discussions. This work was supported in part by the
U.S. Office of Naval Research (Grant No.\ N00014-09-1-0247).
DMA was also supported in part by the John Templeton Foundation.
Research at Perimeter Institute is supported by the Government
of Canada through Industry Canada and by the Province of Ontario
through the Ministry of Research and Innovation.
\end{acknowledgments}


\bibliography{qutritGeom}

\end{document}